\patchcmd{\subsection}{\textbf}{}{}{}
\patchcmd{\subsection}{-.5em}{.2em}{}{}
\newtheorem{theorem}{Theorem}[section]
\newtheorem{lemma}[theorem]{Lemma}
\newtheorem{propos}[theorem]{Proposition}
\newtheorem{corol}[theorem]{Corollary}
\theoremstyle{definition}
\newtheorem{definition}[theorem]{Definition}
\numberwithin{equation}{section}
\newcommand*\tho{\text{\thorn}}
\newcommand*\dho{\text{\dh}}
\DeclareMathOperator{\Tr}{Tr}
\newcommand*\Laplace{\mathop{}\!\mathbin\bigtriangleup}
\newcommand*\Bell{\ensuremath{\boldsymbol\ell}}
\newcommand*\Bn{\ensuremath{\boldsymbol{n}}}
\newcommand*\Bm{\ensuremath{\boldsymbol m}}
\newcommand*\Bk{\ensuremath{\boldsymbol{k}}}
\newcommand*\Bg{\ensuremath{\boldsymbol{g}}}
\newcommand*\BF{\ensuremath{\boldsymbol{F}}}
\newcommand*\Be{\ensuremath{\textbf{e}}}
\newcommand*\Brho{\ensuremath{\boldsymbol\rho}}
\newcommand*\BS{\ensuremath{\textbf{S}}}
\newcommand*\Omegap{\ensuremath{\Omega^{\prime}}}
\newcommand*\omegap{\ensuremath{\omega^{\prime}}}
\newcommand*\de{\ensuremath{\textnormal{d}}}
\newcommand*\bpartial{\ensuremath{\boldsymbol{\partial}}}
\newcommand*\phip{\ensuremath{\varphi^\prime}}
\newcommand{\pp}{{\it pp\,}-}
\def \k {K} 
\def \l {\tilde\Lambda}
\def \Mi {\stackrel{i}{M}}
\def \T {\bigtriangleup  }
\begin{document}

\title{Weyl type N solutions with null electromagnetic fields in the Einstein-Maxwell $p$-form theory}


\author{M. Kuchynka$^{\diamond,\dagger}$,  A. Pravdov\' a$^\dagger$\\
\vspace{0.05cm} \\
{\scriptsize $^\diamond$ Institute of Theoretical Physics, Faculty of Mathematics and Physics, Charles University}, 
{\scriptsize  V Hole\v sovi\v ck\'ach 2, 180 00  Prague 8, Czech Republic}  \vspace{0.04cm} \\
{\scriptsize $^\dagger $ Institute of Mathematics, Czech Academy of Sciences}, \\ 
{\scriptsize \v Zitn\' a 25, 115 67 Prague 1, Czech Republic}   \vspace{0.05cm} \\ 
 {\footnotesize E-mail: \texttt{kuchynkm@gmail.com, pravdova@math.cas.cz}} }



\begin{abstract}
We consider $d$-dimensional solutions to the electrovacuum Einstein-Maxwell equations with the  Weyl tensor of type N 
and a null Maxwell $(p+1)$-form field. We prove that such spacetimes  are necessarily aligned, i.e. 
the Weyl tensor of the corresponding spacetime and the electromagnetic field share the same 
\textit{aligned null direction} (AND). Moreover, this AND is geodetic,  shear-free, non-expanding 
and non-twisting  and hence Einstein-Maxwell equations imply that Weyl type N spacetimes with a null Maxwell $(p+1)$-form field belong to the Kundt class. 
Moreover, these Kundt spacetimes are necessarily $CSI$  and  the $(p+1)$ form is $VSI$.  
Finally, a general coordinate form of  solutions and a reduction of the field equations are discussed. 
\end{abstract}

\maketitle

\section{Introduction and summary}
\label{intro}
We study $d$-dimensional Weyl type N spacetimes with null electromagnetic fields in the context of the generalized Einstein-Maxwell $p$-form theory. 
The theory describes an interaction between a gravitational field $\Bg$ and an electromagnetic field $\BF$ 
and in the electrovacuum case (no sources of the electromagnetic field are present), its action takes the form
\begin{equation}\label{EMakce}
S = \frac{1}{16 \pi} \int \textnormal{d}^d x \sqrt{-g} \left( R - 2\Lambda 
- \tfrac{\kappa_0}{p+1}  F_{a_0 \dots a_{p}} F^{a_0 \dots a_{p}} \right),
\end{equation}
where  $R$ is the Ricci scalar, $\Lambda$ is the cosmological constant, $\kappa_0$ is a coupling constant and
 $\boldsymbol{F}$ is a closed $(p+1)$-form, i.e. 
\begin{equation}\label{EMequations2b}
\nabla_{[a} F_{b_0 \dots b_p ]} = 0.
\end{equation}
 
Varying the action \eqref{EMakce} with respect to the metric $\Bg$ and to the local potential $p$-form $\boldsymbol{A}$ of $\BF$, the least action principle yields a coupled system of the generalized Einstein-Maxwell equations for the pair $(\Bg,\BF)$
\begin{equation}\label{EMequations1}
R_{ab} - \tfrac{1}{2} R g_{ab} + \Lambda g_{ab} = 8 \pi T_{ab}^{EM},
\end{equation}
\begin{equation}\label{EMequations2}
\nabla^a F_{a b_1 \dots b_p} = 0,
\end{equation}
respectively.
Here, $T_{ab}^{EM}$ is the electromagnetic stress-energy tensor associated with the Maxwell field 
$\boldsymbol{F}$, 
\begin{equation}\label{TEM}
 \frac{8 \pi}{\kappa_0} T_{ab}^{EM} \equiv 
F_{a b_1 \dots b_p} F \indices{_b^{b_1 \dots b_p}} 
- \dfrac{1}{2(p+1)}  g_{ab} F_{a_0 \dots a_{p}} F^{a_0 \dots a_{p}}. 
\end{equation}

In the special case of $d=4$ and $p=1$, \eqref{EMakce} reduces to the action of the standard electrovacuum
Einstein-Maxwell theory, which has been extensively studied in the literature. 
We  refer to the system \eqref{EMequations2b} - \eqref{EMequations2}  as the Einstein-Maxwell equations.

Throughout the paper, we restrict ourselves to values $0<p<d-2$ and $d\geq 4$. 
Further, we assume that the Weyl tensor is of algebraic type N in the sense of the null alignment 
classification (see \cite{Milsonetal05,Coleyetal04} or  \cite{review} for a review). This condition can be reformulated in terms of the 
existence of a null vector $\Bell$ such that \cite{MOBelDeb}
\begin{equation}
C_{ab[cd}\ell_{e]}=0.
\end{equation}
The vector $\Bell$ is then referred to as the \textit{Weyl aligned null direction} (Weyl AND or WAND) of the spacetime.  
As in four dimensions, such a spacetime represents a transverse gravitational wave propagating along the null direction $\Bell$ (see the physical interpretation of distinct null frame components of the Weyl tensor carried out in \cite{deviation}). 

Regarding the electromagnetic field $\BF$, we assume that $\BF$ is a (non-vanishing) \textit{null} {(i.e. type N) } form {\cite{Milsonetal05}}, 
i.e. there exists a null vector $\Bk$ such that { \cite{vsiformy}}
\begin{equation}\label{nullconditions}
k^a F_{a b_1 \dots b_p} = 0, \qquad k_{[a}F_{b_0 \dots b_p]} = 0,
\end{equation}
where $\Bk$ is an AND of the Maxwell field 
$\BF$. 
In four-dimensional spacetimes, distinguished geometrical role of null electromagnetic two-forms satisfying 
the source-free Maxwell equations is well known due to the Mariot-Robinson theorem (see theorem 7.4 in \cite{stephani}).  
It states that \textit{a four-dimensional spacetime admits a shear-free geodetic null congruence if and only if it 
admits  a null Maxwell 2-form (a test field) satisfying the Maxwell equations}.  In higher dimensions, there is a partial generalization of the 
theorem  for  Maxwell $(p+1)$-form fields 
(see lemmas 3 and 4 in \cite{ghpclanek}). 

In four dimensions, it is known that Petrov type N Einstein-Maxwell fields are either non-aligned and non-null or aligned (null or non-null) and 
 Kundt (see tables 38.9 and 38.10 in \cite{stephani} and  theorems 7.4, 28.4 and 30.3 therein) and 
any type III/N pure-radiational metric is necessarily aligned 
(see  theorem 3 of \cite{wils}).
Thus, in four dimensions, it follows  that \textit{any Petrov type N null Einstein-Maxwell field is aligned and Kundt}. Here, we extend this result to any dimension $d \geq 4$ and any 
electromagnetic $(p+1)$-form field such that $0<p<d-2$.  

As a consequence of our previous result \cite{NNspacetimes} it follows that \textit{a spacetime  
corresponding to a Weyl type N solution of the Einstein-Maxwell equations with a null Maxwell $(p+1)$-form $\BF$ 
is necessarily aligned with $\BF$}. In other words, the null vectors $\Bk$ and $\Bell$ defined above are parallel. 
Proposition \ref{alignedKundt} then states that \textit{the common AND $\Bell$ of the Weyl tensor and of the Maxwell field $\BF$ is tangent to a shear-free,  expansion-free and non-twisting geodetic null congruence}, i.e. \textit{the spacetime 
belongs to the Kundt class}. 
\begin{propos}\label{alignedKundt}
All Weyl type N solutions of the Einstein-Maxwell equations with a non-vanishing null $(p+1)$-form field are aligned and Kundt. 
\end{propos}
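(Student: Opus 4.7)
The plan proceeds in two stages: alignment of the WAND with the Maxwell AND, and then deduction of the Kundt conditions.

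\textbf{Alignment.} The null conditions \eqref{nullconditions} imply that $\BF$ factorizes as $F_{a_{0}\dots a_{p}}=k_{[a_{0}}\omega_{a_{1}\dots a_{p}]}$ with $k^{a}\omega_{ab_{1}\dots b_{p-1}}=0$. Inserting this into \eqref{TEM}, the kinetic scalar $F_{a_{0}\dots a_{p}}F^{a_{0}\dots a_{p}}$ vanishes, while the remaining piece is manifestly of the form $T^{EM}_{ab}\propto k_{a}k_{b}$. The Einstein equations \eqref{EMequations1} then give a Ricci tensor of algebraic type N aligned with $\Bk$, so the previous result \cite{NNspacetimes} applies and yields $\Bell\parallel\Bk$; from now on I fix $\Bk=\Bell$.

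\textbf{Geodeticity of the common AND.} In a null frame $(\Bell,\Bn,\Bm_{(i)})$ adapted to $\Bell$ the Weyl tensor is supported on its boost-weight $-2$ components $\Omega_{ij}$ and the Ricci tensor on the boost-weight $-2$ scalar $\Phi$. I would contract the second Bianchi identity $\nabla_{[e}C_{ab]cd}=0$ with the frame vectors so as to isolate the boost-weight $-3$ equations. These take the form of a linear relation between $\Omega_{ij}$ and the non-geodetic coefficients $\kappa_{i}=\ell^{b}\ell_{a;b}m^{a}_{(i)}$, and since $\Omega_{ij}$ is nontrivial by assumption they force $\kappa_{i}=0$. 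After an affine reparametrization one may then assume $\ell^{b}\nabla_{b}\ell^{a}=0$.

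\textbf{Vanishing of the optical matrix.} It remains to show that the optical matrix $\rho_{ij}=\ell_{a;b}m^{a}_{(i)}m^{b}_{(j)}$ is identically zero, equivalently that $\Bell$ is shear-free, non-expanding and non-twisting. Combining the boost-weight $-2$ Bianchi identities for $\Omega_{ij}$ with the higher-dimensional Sachs/Ricci identities for $\rho_{ij}$ -- in which the boost-weight $0$ Ricci components are pinned down to zero by the aligned type N source -- and using the Maxwell equations \eqref{EMequations2b} and \eqref{EMequations2} on the null form $\BF$ (a higher-dimensional Mariot--Robinson input), I would derive algebraic relations of the schematic shape
\begin{equation*}
\rho_{(i|k|}\Omega_{j)k}+\tfrac{1}{2}\rho_{kk}\Omega_{ij}=0,\qquad \rho_{[ij]}\Omega_{jk}=0.
\end{equation*}
A linear-algebra argument on the $(d-2)$-dimensional screen space, exploiting the non-triviality of $\Omega_{ij}$, then forces $\rho_{ij}=0$.

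The main obstacle is this last step. In higher dimensions the optical matrix carries $(d-2)^{2}$ independent components and, unlike the four-dimensional setting of Wils \cite{wils} where the Goldberg--Sachs theorem does most of the work, one cannot rely on a single shear-free geodetic congruence to handle all of them. A careful simultaneous use of the Maxwell, Bianchi and Ricci identities at the appropriate boost weights is therefore required to kill the expansion, twist and shear of $\Bell$ in one stroke, and in particular to distinguish the present null-electromagnetic case from the pure Einstein type N spacetimes, which do admit Robinson--Trautman-like expanding representatives.
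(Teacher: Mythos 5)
Your first two steps match the paper's: the stress--energy tensor of a null form is of type N aligned with $\Bk$, so the Ricci tensor is of traceless Ricci type N, proposition 2.1 of \cite{NNspacetimes} gives $\Bk\propto\Bell$, and geodeticity of the common AND follows from the contracted Bianchi identity (or, equivalently, from the Maxwell equations for a null field). Up to that point your argument is sound and is essentially the one used in the paper.

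The gap is exactly where you flag it: the vanishing of $\Brho$. The two ``schematic'' relations you write down are neither derived nor, as stated, sufficient, and no argument based only on the Bianchi and Ricci identities can close the proof, because for Weyl type N those identities merely restrict $\Brho$ to the canonical form $\Brho=\mathrm{diag}(\mathcal{L},\boldsymbol{0},\dots,\boldsymbol{0})$ with a single $2\times2$ block \eqref{blockL}; they do not force $\Brho=\boldsymbol{0}$ (type N Einstein spacetimes with $b=1$ and $\Brho\neq\boldsymbol{0}$ are not excluded by them). The inputs that actually kill $\Brho$ are specifically electromagnetic and concrete. First, $b\neq1$ in \eqref{blockL} because the source is non-trivial ($\omegap\neq0$). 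Second, lemma 3 of \cite{ghpclanek} --- the higher-dimensional Mariot--Robinson-type constraint you only allude to --- states that the eigenvalues of the symmetric part of $\Brho$ split into two families each summing to $\rho/2$, as in \eqref{twosums}; since the only non-zero eigenvalues are $s$ and $sb$, a non-zero trace would force $s=sb=\rho/2$, i.e.\ $b=1$, a contradiction, so $\rho=0$. That leaves $s=0$ (done) or $b=-1$, and the residual case $s\neq0$, $b=-1$, $a\neq0$ is excluded not by linear algebra on $\Omega^{\prime}_{ij}$ but by writing the algebraic Maxwell constraints \eqref{3.5}--\eqref{3.7} in components and showing they annihilate every component of $\boldsymbol{\varphi}^{\prime}$, contradicting $\BF\neq\boldsymbol{0}$. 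Without these two inputs your final step does not go through.
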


Further, applying results of appendix \ref{apendixB},  
where it is shown that all Weyl type N Kundt spacetimes with a type N traceless Ricci tensor are CSI and an arbitrary covariant derivative of 
their Riemann tensor is even VSI,
it follows that: 
\begin{propos}\label{CSIaVSI}
Any Weyl type N solution of the Einstein-Maxwell equations with a non-vanishing null electromagnetic field consists of a 
$CSI$ spacetime and  a $VSI$ $(p+1)$-form.\footnote{$\boldsymbol{T}$ is a $CSI$ (constant scalar invariant) 
tensor if all scalar polynomial invariants constructed from $\boldsymbol{T}$ and its covariant derivatives 
of arbitrary order are constant. If, moreover, all these scalar invariants vanish, we say that $\boldsymbol{T}$ 
is a VSI (vanishing scalar invariant) tensor. $CSI$/$VSI$ spacetime is then a spacetime whose Riemann 
tensor is $CSI$/$VSI$, see \cite{csiclanek,ColHerPel09b,ColHerPel10} and \cite{VSI4d,HDVSI2,vsiclanek}, respectively (see also \cite{review} and reference therein).}  For $\Lambda = 0$, the corresponding spacetime is $VSI$. 
\end{propos}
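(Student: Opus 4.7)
The first step is to apply Proposition~\ref{alignedKundt}, which identifies $\Bk=\Bell$ as a common aligned null direction and gives the Kundt property for the spacetime. The null conditions \eqref{nullconditions} then imply $F^{a_0\dots a_p}F_{a_0\dots a_p}=0$ and $F_{ab_1\dots b_p}F\indices{_b^{b_1\dots b_p}}\propto\ell_a\ell_b$, so by \eqref{TEM} the stress-energy tensor is traceless and of the form $T_{ab}^{EM}=\Phi\,\ell_a\ell_b$ for some scalar $\Phi$. Taking the trace of the Einstein equation \eqref{EMequations1} yields the constant Ricci scalar $R=\tfrac{2d\Lambda}{d-2}$, and substituting back gives
\begin{equation*}
R_{ab}=\tfrac{2\Lambda}{d-2}\,g_{ab}+8\pi\,\Phi\,\ell_a\ell_b,
\end{equation*}
so that the traceless Ricci tensor is of type N aligned with the same $\Bell$ as the Weyl tensor.

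This is exactly the setup of the result proved in appendix~\ref{apendixB} (Weyl type N Kundt with type N traceless Ricci), so the $CSI$ conclusion --- and the $VSI$ property of arbitrary covariant derivatives of the Riemann tensor --- follows directly. It remains to show that $\BF$ and all of its covariant derivatives form a $VSI$ system. My approach is a boost-weight analysis in a null frame adapted to $\Bell$. By \eqref{nullconditions}, $\BF$ has only components of negative boost weight. Because $\Bell$ is geodetic, shear-free, expansion-free and twist-free in a Kundt background, those Ricci rotation coefficients that could raise boost weight under $\nabla$ all vanish, and an induction on the order of differentiation gives that $\nabla^{(k)}\BF$ is supported entirely in the negative boost weight sector for every $k\geq 0$. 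When expanded in this frame, any scalar polynomial invariant built from these tensors is a sum of monomials of definite boost weight; the total boost weight of each monomial is a sum of strictly negative integers and is therefore nonzero, so by boost invariance every monomial, and hence the invariant, must vanish.

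The main technical obstacle is precisely this boost-weight propagation step for $\BF$: while the analogous statement for the Riemann tensor is standard (and is what underlies the $CSI$ result of appendix~\ref{apendixB}), the case of a general $(p+1)$-form requires an explicit check of how its null-frame components transform under $\nabla$, following the $VSI$-form analysis of \cite{vsiformy}. For the $\Lambda=0$ case one then has $R=0$, so that $R_{ab}=8\pi\,\Phi\,\ell_a\ell_b$ itself has only negative boost weight components. Combined with the Weyl tensor being of boost weight $-2$ (type N), the full Riemann tensor has only negative boost weight components in the $\Bell$-adapted frame, and the same propagation argument then upgrades $CSI$ to $VSI$ for the spacetime.
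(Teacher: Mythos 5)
Your overall route is the same as the paper's: proposition \ref{alignedKundt} plus the appendix~\ref{apendixB} machinery (corollary \ref{CSIKundt}) gives the $CSI$ property, the $\Lambda=0$ case follows because the Riemann tensor is then of type N aligned with the Kundt direction (theorem 1 of \cite{HDVSI2}), and your rederivation of the Ricci form $R_{ab}=\tilde\Lambda g_{ab}+8\pi T^{EM}_{11}\ell_a\ell_b$ is exactly \eqref{EMequation1i}, already established in section \ref{notationsekce}, which puts the spacetime in the class \eqref{tracelessRicciN}. The only real divergence is the $VSI$ property of $\BF$: the paper simply invokes the characterization of $VSI$ $(p+1)$-forms (theorem 1.5 of \cite{vsiformy}, whose hypotheses --- degenerate Kundt background, $\BF$ closed and of type N --- are all in place), whereas you sketch the underlying boost-weight argument yourself.

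In that sketch there is one step that would fail as literally stated. It is not true that ``those Ricci rotation coefficients that could raise boost weight under $\nabla$ all vanish'' in a Kundt spacetime: $\kappa_i$ and $\rho_{ij}$ vanish, but $L_{10}$ (boost weight $+1$) need not, and more importantly the boost-weight-zero part of $\nabla\BF$ contains the term $\tho\varphi^\prime_{k_1\dots k_p}$, which is killed not by the geometry but by the Maxwell equation \eqref{3.5'} once $\Brho=\boldsymbol{0}$. Without that input a type N form on a Kundt background is \emph{not} $VSI$ (take $\varphi^\prime$ depending on the affine parameter $r$). Likewise, ``negative boost weight only'' is not an inductive invariant under $\nabla$: to close the induction one needs the stronger balanced condition ($D^{-b-k}\eta=0$ on components of boost weight $b$), which is precisely the $k$-balanced formalism of appendix~\ref{apendixB} and of \cite{vsiformy}. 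Since you explicitly flag this propagation step as the outstanding technical obstacle and point to the correct reference, I read this as an honestly acknowledged gap rather than an error --- but be aware that the missing ingredient is the field equation \eqref{3.5'} (equivalently, closedness of $\BF$ together with the degenerate Kundt structure), not merely the vanishing of the optical matrix.
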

As a consequence, such solutions possess a certain universality property \cite{vsiformy},\cite{MPuniv}.

The paper is organized as follows. In the next  section, we review some basic notions and notation employed throughout the paper and briefly discuss the structure of the field equations following from \eqref{EMakce} in the case of the null Maxwell field (section \ref{notationsekce}). 
 Section \ref{properties} contains proofs of  propositions \ref{alignedKundt} and \ref{CSIaVSI}, in which more general results of  appendix \ref{apendixB} are employed. 
In section \ref{generalform}, a general coordinate form of both the metric and the Maxwell field in adapted coordinates is obtained. 
Subsequently, a reduction of the Einstein-Maxwell equations to a set of equations for quantities emerging in the local description of $\Bg$ and $\BF$ follows. A  class of solutions { with a covariantly constant WAND},  \pp-waves, is briefly discussed.
At the end of the paper, appendix \ref{apendixB} devoted to the study of scalar curvature invariants in general Weyl type N Kundt spacetimes of traceless Ricci type N is included.

\section{Preliminaries}
\label{notationsekce}
For purposes of this paper, let us recall the notation, algebraic classification and NP and GHP formalisms as were introduced in
\cite{Coleyetal04,Milsonetal05,ricciclanek,ghpclanek}
and summarized in \cite{review}.
In a $d$-dimensional spacetime, we shall consider a local \textit{null} frame $\{ \textbf{e}_{(a)} \}  \equiv \{ \Bell, \Bn, \Bm_{(i)} \}$ 
with two null vector fields $\Bell,\ \Bn$ and $d-2$ spacelike vector fields $\{ \Bm_{(i)} \}$ such that they satisfy the following orthogonality relations 
\begin{equation}\label{ortogonality}
\ell_a  n^a = 1, \qquad 
\ell_a \ell^a = n_a n^a = \ell_a m_{(i)}^a = n_a m_{(i)}^a = 0, \qquad
 m_a^{(i)}  m_{(j)}^a = \delta_j^i .
\end{equation}
Let us note that relations \eqref{ortogonality} are invariant under Lorentz transformations of $\{ e_{(a)} \}$. 

\subsection{Ricci rotation coefficients and directional derivatives}
Now, let us introduce a few basic objects of the higher-dimensional Newman-Penrose formalism. 
Given a null frame $\{ \Bell,\ \Bn,\ \Bm_{(i)} \}$,  consider covariant derivatives of the individual frame vectors
\begin{equation}
L_{ab} \equiv \nabla_b \ell_a, \qquad N_{ab} \equiv \nabla_b n_a, \qquad 
\overset{i}{M}_{ab} \equiv \nabla_b m_{a}^{(i)}.  
\end{equation}
Projecting these derivatives onto the null frame, one obtains a set of scalars $L_{(a)(b)}$, $N_{(a)(b)}$ and 
$\overset{i}{M}_{(a)(b)}$, the so-called \textit{Ricci rotation coefficients}. We  omit the parenthesis, whenever it is clear that the corresponding quantities are projections of a tensor onto the null frame. Due to the orthogonality relations 
\eqref{ortogonality} satisfied by the null frame vectors, the corresponding rotation coefficients fulfill the following equalities
\begin{equation}
N_{0a} + L_{1a} = 0, \qquad \overset{i}{M}_{0a} +  L_{ia} = 0, \qquad \overset{i}{M}_{1a} + N_{ia} = 0, 
\qquad \overset{i}{M}_{ja} + \overset{j}{M}_{ia} = 0, 
\end{equation}
\begin{equation}
L_{0a} = N_{1a} = \overset{i}{M}_{ia} = 0.
\end{equation}
For transformation properties of the Ricci rotation coefficients under the Lorentz transformations, see \cite{ricciclanek}.
Lastly, let us denote the null frame directional derivatives as 
\begin{equation}
D \equiv \ell^a \nabla_a, \qquad \Delta \equiv n^a \nabla_a, \qquad \delta_i \equiv m_{(i)}^a \nabla_a.
\end{equation}
Using these, the action of the covariant derivative $\nabla$ can be decomposed in the following way 
\begin{equation}\label{covariantDecomposition}
\nabla_a = n_a D + m_a^{(i)} \delta_i + \ell_a \Delta.
\end{equation}

\subsection{Optical matrix and optical scalars}
When studying  geometrical properties of null congruences, it is usually convenient to employ an adapted null 
frame with $\Be_{(0)}$ being the null vector tangent to the congruence. Then, some information on geometry 
of the congruence is encoded in $L_{ab}$. For example, the congruence is geodetic if and only if 
$\kappa_i \equiv L_{i0}$ vanishes. In such a case, the scalar $L_{10}$ corresponds to an affine parametrization of the congruence - it is affinely parametrized if and only if $L_{10} = 0$. 

For a geodetic congruence, some other geometrical properties are encoded in the optical matrix $\rho_{ij} \equiv L_{ij}$. 
Consider the following decomposition of $\rho_{ij}$
\begin{equation}
\rho_{ij} = \sigma_{ij} + \theta \delta_{ij} + A_{ij},
\end{equation}
 where $\sigma_{ij}$ denotes the shear part, $\theta \delta_{ij}$ is the expansion part and $A_{ij}$ is 
the antisymmetric part of $\rho_{ij}$. 
Using these particular matrices, the following \textit{optical scalars} can be defined: 
 the \textit{expansion} $\theta$, \textit{shear} $\sigma^2 \equiv \sigma_{ij}\sigma_{ij}$ 
and \textit{twist} $\omega^2 \equiv A_{ij} A_{ij}$ of the corresponding congruence.

\subsection{Weyl type N spacetimes and null electromagnetic fields}
Recall the conditions \eqref{nullconditions} for $\BF$ to be a null form. Employing an adapted frame $\{ \Bk, \Bn, \Bm_{(i)}\}$ with $\Bk$ being the null vector emerging in \eqref{nullconditions}, the only non-vanishing independent null frame components of $\boldsymbol{F}$ are the boost weight $(-1)$ quantities
\begin{equation}\label{nullF}
\varphi_{k_1 \dots k_p}^\prime \equiv F_{a b \dots c} n^{a} m_{(k_1)}^b \dots m_{(k_p)}^c, 
\end{equation}
i.e. $\BF$ is of type N with the aligned null direction (AND) $\Bk$ (see e.g. section 3 of \cite{ghpclanek}).  

For a Weyl type N spacetime, the only non-trivial independent frame components of the Weyl tensor $\boldsymbol{C}$ in an adapted frame with $\Be_{(0)} \equiv \Bell$ being the corresponding WAND are 
\begin{equation}
\Omegap_{ij} \equiv C_{abcd} n^a m_{(i)}^b n^c m_{(j)}^d.
\end{equation}
If also the components $\Omegap_{ij}$ vanish, the Weyl tensor is said to be of algebraic type O and the spacetime is conformally flat.

\subsection{Einstein equations for null electromagnetic fields} 
For a null form $\boldsymbol{F}$ { (\ref{nullconditions})}, the scalar invariant $F_{a_0 \dots a_{p}} F^{a_0 \dots a_{p}}$ in \eqref{TEM} vanishes and $T_{ab}^{EM}$ is of type N (with the same AND { $\Bk$}). 
In an adapted frame { $\{ \Bk, \Bn, \Bm_{(i)}\}$},  its only non-vanishing null frame component reads
\begin{equation}\label{EMtensor}
T_{11}^{EM} = \frac{\kappa_0}{8\pi} F_{1 a_1 \dots a_p} F \indices{^{0 a_1 \dots a_p}}.
\end{equation}
Thus, denoting $\tilde \Lambda \equiv 2 \Lambda / (d-2)$, the field equation \eqref{EMequations1} takes the form
\begin{equation}\label{EMequation1i}
R_{ab} =  \tilde \Lambda g_{ab} + 8\pi T_{11}^{EM} { k_a k_b},
\end{equation}
or, equivalently, in the GHP formalism  (see table 3 of \cite{ghpclanek})
\begin{equation}\label{EMequation1ghp}
\omega = 0, \qquad \psi_i = 0, \qquad \phi = \tilde \Lambda, \qquad 
\phi_{ij} = \tilde \Lambda \delta_{ij}, \qquad 
\psi_i^\prime = 0, \qquad
\omegap = 8 \pi T_{11}^{EM}.
\end{equation}
Thus, the corresponding solution is of traceless Ricci type N and the multiple Ricci AND is precisely the multiple AND { $\Bk$} of the Maxwell field $\boldsymbol{F}$. 
Moreover, due to the alignment result for spacetimes of Weyl and traceless Ricci type N (see proposition 2.1 of \cite{NNspacetimes}), we obtain that \textit{the Maxwell field $\boldsymbol{F}$ is aligned with the Weyl tensor}, i.e. $\Bk$ is also a quadruple WAND in the spacetime, i.e. $\Bk \propto \Bell$ and without loss of generality, one can set $\Bk = \Bell$. Furthermore, $\Bell$ \textit{is geodetic}  - this follows independently either from the contracted Bianchi identity or from the source-free Maxwell equations for a null Maxwell (test) field (see \cite{MO07} for the 2-form case and \cite{ghpclanek} for an arbitrary $p$).

\subsection{Maxwell equations for null electromagnetic fields}
Employing an adapted null frame $\{ \Bell, \Bn, \Bm_{(i)}\}$, the source-free GHP Maxwell equations \eqref{EMequations2b}, \eqref{EMequations2} for a null Maxwell field $\BF$ with a geodetic AND $\Bell$ reduce to (see section 3 of \cite{ghpclanek}):
\begin{align}
\left( p \rho_{[k_1 |i} + p \rho_{i[k_1|} - \rho \delta_{[k_1|i} \right) \phip_{i|k_2 \dots k_p]} &= 0, \label{3.5} \\
\phip_{[k_1 \dots k_p} \rho_{k_{p+1} k_{p+2}]} &= 0,\label{3.6} \\
\rho_{[ij]} \phip_{ij k_1 \dots k_{p-2}} &= 0, \label{3.7} \\
\dho_{i} \phip_{i k_1 \dots k_{p-1}} &= \tau_{i} \phip_{i k_1 \dots k_{p-1}}, \label{3.3'} \\
\dho_{[k_1} \phip_{k_2 \dots k_{p+1}]} &= \tau_{[k_1} \phip_{k_2 \dots k_{p+1}]}, \label{3.4'} \\
2 \tho \phip_{k_1 \dots k_p} &= \left(
p \rho_{[k_1 | i} - p \rho_{i [ k_1|} - \rho \delta_{[ k_1 | i}
 \right) \phip_{i | k_2 \dots k_{p}]}. \label{3.5'}
\end{align}
For $p=1$, equation \eqref{3.7} does not appear. Also note that for $p>d-4$, equation \eqref{3.6} is identically satisfied.

\section{Proofs of the main results}
\label{properties}

\begin{proof}[Proof of proposition \ref{alignedKundt}]
We have already argued that $\Bell$ is the common geodetic AND of the Weyl tensor and the corresponding Maxwell field. 
Now, we shall prove that its optical matrix $\Brho$ vanishes. 
In order to do that, we start from the canonical form of the optical matrix corresponding to the WAND in spacetimes of Weyl and traceless Ricci type N (given by equation (1.3) of \cite{NNspacetimes}) and argue that it is compatible with restrictions following from the Maxwell equations (see lemmas 3 and 4 in \cite{ghpclanek}) only if $\Brho = \boldsymbol{0}$. 
For simplicity, let us assume that $p>1$. The $p=1$ case 
can be easily proved in a similar way using the stronger result of lemma 4 of \cite{ghpclanek}. 

Let $\mathcal{F}$ be a frame in which $\Brho$ takes the canonical form and let 
$S_{ij} \equiv \rho_{(ij)}$, $A_{ij} \equiv \rho_{[ij]}$, $\rho \equiv \Tr \Brho$. 
The canonical form of $\Brho$ in spacetimes of Weyl and traceless Ricci type N reads 
$\Brho =  \textnormal{diag}(\mathcal{L}, \boldsymbol{0}, \dots,\boldsymbol{0})$ with the only possibly non-vanishing $2 \times 2$ block
\begin{equation}\label{blockL}
\mathcal{L} = s 
\renewcommand*{\arraystretch}{1.25}
 \begin{bmatrix}
  1& {a\ } \\
  -a\ & {b\ }  
 \end{bmatrix},
\end{equation}
where $b \neq  1$, otherwise the spacetime is Einstein, i.e. $\omegap = 0$. 

First, we  show that $\Brho$ is traceless, i.e. $\rho = 0$.
According to lemma 3 of \cite{ghpclanek}, the eigenvalues $\{ S_i \}$ of $\BS$ can be rearranged such that 
\begin{equation}\label{twosums}
\sum_{i=2}^{p+1} S_i = \frac{\rho}{2} = \sum_{j=p+2}^{d-1} S_j.
\end{equation}
Assume that $\rho$ is non-vanishing. Then each of these two sums has to contain at least one 
non-vanishing element of $S_i$. 
In our case, the only non-vanishing eigenvalues of $\BS$ are $s$ and $sb$. Hence, 
each sum in \eqref{twosums} has to contain exactly one of these two elements. 
Since the rest of the eigenvalues of $\BS$ vanishes, these sums reduce to $s = \rho/2$ and $sb = \rho/2$. 
In particular, one has that $b=1$. This contradicts our assumption of the presence of a non-trivial Maxwell field $\boldsymbol{F}$ in the spacetime. Hence $\rho = 0$.

Now, we  prove that this already implies $\Brho = \boldsymbol{0}$. 
From $\rho = 0$, we have that $s=0$ or $b=-1$. Of course, if $s=0$, we are done.   
Hence, let us suppose that $s \neq 0$ and $b=-1$ instead. Note that, at the moment, $\Brho$ can be non-trivial only if both the shear and the twist of the corresponding geodetic null congruences are non-vanishing, otherwise the Sachs equation implies $\Brho = \boldsymbol{0}$ (see (15b) in \cite{ricciclanek}). Hence both $s$ and $a$ in \eqref{blockL} are non-vanishing. 
Let us define auxiliary GHP scalars $\boldsymbol{T}$, $\boldsymbol{U}$ and $\boldsymbol{V}$ as 
\begin{align}
T_{k_1  \dots k_p} &\equiv S_{[k_1| i} \varphi_{i|k_2 \dots k_p]}^\prime 
= S_{[k_1| 2} \varphi_{2|k_2 \dots k_p]}^\prime +  S_{[k_1| 3} \varphi_{3|k_2 \dots k_p]}^\prime, \\
U_{k_1 \dots k_{p+2}} &\equiv \varphi_{[k_1 \dots k_p}^\prime A_{k_{p+1} k_{p+2}]}, \\
V_{k_1 \dots k_{p-2}} &\equiv A_{ij}\varphi_{ij k_1 \dots k_{p-2}}^\prime 
= 2 s a \varphi_{23 k_1 \dots k_{p-2}}^\prime  \label{V}
\end{align} 
($\boldsymbol{V}$ is defined only for $p>1$). The Maxwell equations \eqref{3.5}, \eqref{3.6}, and \eqref{3.7} then read  
\begin{align}\label{TUV}
\boldsymbol{T} = \boldsymbol{0}, \qquad
\boldsymbol{U} = \boldsymbol{0}, \qquad
\boldsymbol{V} =\boldsymbol{0} .
\end{align} 
In view of \eqref{V}, the last equation of \eqref{TUV} immediately implies $ \varphi_{23 k_1 \dots k_{p-2}}^\prime = 0$ for 
every $k_1, \dots, k_{p-2} >3$.  
Let us further examine $\boldsymbol{T}$ and $\boldsymbol{U}$ for a particular choice of indices.  
For every $k_1, \dots, k_p >3$, we have the following
\begin{align}
s \varphi_{2 k_2 \dots k_p}^\prime & \propto T_{2 k_2  \dots k_p}  , \label{T1} \\ 
s \varphi_{3 k_2 \dots k_p}^\prime &\propto T_{3 k_2  \dots k_p}  , \label{T2} \\
s a \varphi_{ k_1 \dots k_p}^\prime & \propto U_{2 3 k_1 \dots k_p}.
\end{align} 
Thus, according to the first and the second equation of \eqref{TUV}, also $\varphi_{2 k_2 \dots k_p}^\prime $, $\varphi_{3 k_2 \dots k_p}^\prime $ and 
$ \varphi_{ k_1 \dots k_p}^\prime$ vanish. Hence
$\boldsymbol{\varphi}^\prime$ vanishes completely. Since the Maxwell form $\boldsymbol{F}$ is non-vanishing, we arrive at a contradiction. Thus, $\boldsymbol{\rho}$ has to be zero and the spacetime is Kundt. 
\end{proof}
\noindent 
Moreover, since the corresponding spacetime is of the Weyl type N with the Kundt AND $\Bell$ and its Ricci tensor takes the form 
\eqref{EMequation1i}, we conclude that it is a degenerate Kundt spacetime (see sec. 7.1.2 in \cite{review} and references therein).  

\begin{proof}[Proof of proposition \ref{CSIaVSI}]
Suppose that $(\Bg,\BF)$ is a solution of the Einstein-Maxwell { equations} satisfying our assumptions. The $CSI$ property of $\Bg$ follows immediately from theorem \ref{alignedKundt} and corollary \ref{CSIKundt}. 
In the case of $\Lambda = 0$, the spacetime 
 is of Weyl and Ricci type N. Hence its Riemann tensor is also of type N and  
{ hence the spacetime is $VSI$} (see theorem 1 of \cite{HDVSI2}). 

To prove that $\BF$  is $VSI$, recall that the spacetime 
is a degenerate Kundt spacetime and $\BF$ is a closed form of type N.  
Then, employing the characterization of VSI electromagnetic $(p+1)$-forms (theorem 1.5 of \cite{vsiformy}), we arrive at the desired result. 
\end{proof}

Theorems \ref{alignedKundt} and \ref{CSIaVSI} suggest that such solutions may be of interest also in various generalized theories. 
In particular, $\BF$ corresponding to any solution $(\Bg, \BF)$ under consideration is a \textit{universal} \cite{MPuniv} test solution to generalized electrodynamics on the fixed background {metric} $\boldsymbol{g}$,  i.e. it is a solution to any electrodynamics with the field equations of the form 
\begin{equation}\label{generalelectro}
\de \boldsymbol{F} = \boldsymbol{0}, \qquad \ast\ \de\ast\tilde{\boldsymbol{F}} = \boldsymbol{0},
\end{equation}
where $\ast$ denotes the Hodge dual in $\boldsymbol{g}$ and $\tilde{\boldsymbol{F}}$ is any $(p+1)$-form constructed as a polynomial of $\boldsymbol{F}$ and its covariant derivatives of an arbitrary order (see section 2.4 of \cite{vsiformy}).\footnote{For $\tilde{\boldsymbol{F}}=\boldsymbol{F}$, 
 equations \eqref{generalelectro} reduce to the Maxwell equations.}  
Thus, $(\boldsymbol{g},\boldsymbol{F})$ is also a solution to the Einstein equations \eqref{EMequations1} coupled with the electrodynamics \eqref{generalelectro}.

\section{General form of the solution}
\label{generalform}
In this section, we discuss the general form of a Weyl type N metric $\boldsymbol{g}$ and a null Maxwell $(p+1)$-form field $\boldsymbol{F}$ satisfying the Einstein Maxwell equations \eqref{EMequations2b}--\eqref{EMequations2}.  

Since ($\Bg$, \BF) is a degenerate Kundt spacetime with a VSI form, the discussion on the local form of a solution in sections 2 and 3 of the paper \cite{vsiformy} applies. 
In particular, $\Bg$ and $\BF$ take the local coordinate form (9) and (13) of \cite{vsiformy}, respectively. 
These are then subject to the field equations (15), (21), (22), (24), and (25) of the corresponding paper. 
In particular, negative boost weight field equations $\omega = 0$, $\psi_i = 0$ are automatically satisfied. 

However, such a spacetime is, in general, of the Weyl type II. Thus, in order to obtain a solution of the Weyl type N, conditions on an algebraic type of the Weyl tensor need to be imposed. 
This can be achieved by employing conditions II(a) - III(b) of \cite{Kundtclass}, where an explicit algebraic classification of the general Kundt line element was carried out. 
Doing so, the local form of both $\Bg$ and the field equations further reduces. 

It is convenient to discuss the coordinate form of the metric first. Then, we briefly discuss the general coordinate form of the electromagnetic field.

As has already been said, the local form of the metric in adapted Kundt coordinates $\{ r,u,x^\alpha \}$ with $r$ being an affine parameter of the multiple WAND $\Bell \equiv \bpartial_r$ is given by (9) of \cite{vsiformy}. 
Employing Weyl type N conditions II(a) - III(b) of  \cite{Kundtclass},  { the } local form of $\Bg$ further reduces to 
\begin{equation}\label{localmetric}
\de s^2 = 2 H(r,u,x)\de u^2 + 2\de u \de r + 2 W_\alpha(r,u,x) \de x^\alpha \de u + g_{\alpha \beta}(u,x) \de x^\alpha \de x^\beta, 
\end{equation}
where $\boldsymbol{g}^\perp \equiv g_{\alpha \beta} \de x^\alpha \de x^\beta$ is a Riemannian metric on a $(d-2)$-dimensional transverse space spanned by $\{ x^\alpha \}$, which is of constant sectional curvature $\k$ depending on the cosmological constant and the dimension of the spacetime as 
\begin{equation}\label{ccmetric}
\k = \frac{\tilde \Lambda}{d-1} = \frac{2 \Lambda}{(d-1)(d-2)}
\end{equation}
and the metric functions $H$ and $W_\alpha$ take the form
\begin{equation}
W_{\alpha}(r,u,x) = r W_{\alpha}^{(1)}(u,x) + W_{\alpha}^{(0)}(u,x),
\end{equation}
\begin{equation}
H (r,u,x) = r^2 H^{(2)}(u,x) + r H^{(1)}(u,x) + H^{(0)}(u,x).
\end{equation}
As a consequence of the fact that $\boldsymbol{g}^\perp$ is a constant curvature metric with $\k$ given by \eqref{ccmetric}, 
the field equations (22), (21) of \cite{vsiformy} corresponding to the boost weight zero equations 
$\phi = \tilde \Lambda$, $\phi_{ij} = \tilde \Lambda \delta_{ij}$ reduce to 
\begin{equation}\label{EFE1}
2 H^{(2)} = \frac{1}{4}  W_{\alpha}^{(1)} W^{(1) \alpha} + \k,
\end{equation}
\begin{equation}\label{EFE2}
 W_{(\alpha || \beta)}^{(1)} - \frac{1}{2} W_{\alpha}^{(1)}W_{\beta}^{(1)} = 2 \k g_{\alpha \beta}, 
\end{equation}
respectively. Here, $||$ denotes the covariant derivative in the transverse space with the metric $\boldsymbol{g}^\perp$. 
The contracted Bianchi identity (see (23) of \cite{vsiformy}) for the transverse metric $\Bg^{\perp}$ 
is automatically satisfied provided 
Weyl type condition (IId) of \cite{Kundtclass} holds
\begin{equation}\label{WC1}
W_{[\alpha || \beta]}^{(1)} =0. 
\end{equation}
Further, employing condition III(a) of \cite{Kundtclass}, equation (24) of \cite{vsiformy} corresponding to the boost weight $(-1)$ field equation $\psi_i^\prime = 0$ reads
\begin{equation}\label{EFE3}
2H_{,\alpha}^{(1)} = W_{\alpha,u}^{(1)} - W_{[\alpha || \beta]}^{(0)} W^{(1) \beta}
 + \frac{1}{2} g_{\alpha \beta,u} W^{(1)\beta} + \frac{1}{2} W_{\beta}^{(1)}W^{(0)\beta} W_{\alpha}^{(1)} 
+ 2\k W_{\alpha}^{(0)}.  
\end{equation} 
 Conditions III(a), III(b) of \cite{Kundtclass} reduce to 
\begin{equation}\label{WC2}
H_{, \alpha}^{(2)} - H^{(2)} W_{\alpha}^{(1)} = 0, 
\end{equation}
\begin{equation}\label{WC3}
{W_{[\alpha|| \beta] || \gamma}^{(0)}} = 
 \frac{1}{2}\left(  W_{\gamma \beta}^{(0)} W_{\alpha}^{(1)}  - W_{\gamma \alpha}^{(0)} W_{\beta}^{(1)}  \right)  
- \k \left(  g_{\gamma \beta} W_{\alpha}^{(0)}  - g_{\gamma  \alpha} W_{\beta}^{(0)} \right)
- g_{\gamma [ \beta , u || \alpha ]},
\end{equation}
respectively, where the auxiliary geometrical quantity $W_{\alpha \beta}^{(0)}$ is defined as 
\begin{equation}
W_{\alpha \beta}^{(0)} \equiv W_{(\alpha || \beta)}^{(0)} - \frac{1}{2} g_{\alpha \beta, u}.
\end{equation}
Finally, equation (25) { of \cite{vsiformy}} corresponding to the boost weight $(-2)$ field equation $\omegap = \kappa_0 f_{\alpha \dots \beta} f^{\alpha \dots \beta}$ simplifies to a slightly simpler form
\begin{align}\label{EFE4}
\begin{split}
 \Laplace_{LB} H^{(0)} &+ W^{(1)\alpha} H_{,\alpha}^{(0)} + \left( 4H^{(2)} +2(d-3)\k \right)H^{(0)} = 
2 W_{\alpha}^{(0)} W^{(0)\alpha} H^{(2)} \\
&+ g^{\alpha \beta} W_{\alpha \beta}^{(0)} H^{(1)}  + W_{[\alpha || \beta]}^{0} W^{(0) [\alpha || \beta]} + {W_{\alpha,u}^{(0)}} \indices{^{|| \alpha}}  
- W_{[\alpha || \beta]}^{(0)} W^{(1)\alpha}W^{(0)\beta} \\
&+ \frac{1}{2} g_{\alpha \beta,u} \left( W^{(1)\alpha} W^{(0) \beta} 
- \frac{1}{2}{g^{\alpha \beta}} \indices{_{,u}}  \right) - \frac{1}{2} g^{\alpha \beta} g_{\alpha \beta,uu}  - \kappa_0  f_{\alpha \dots \beta} f^{\alpha \dots \beta},
\end{split}
\end{align} 
where $\Laplace_{LB}$  denotes the Laplace-Beltrami operator in the transverse space, i.e. in coordinates, it acts on a scalar as  $\Laplace_{LB} f \equiv g^{\alpha \beta} f_{,\alpha || \beta}$. 

The coordinate form \eqref{localmetric} of the metric is preserved under the coordinate transformations (81) of \cite{generalKundt}. 
In particular,  since the spacetime 
is a $CSI$ spacetime, without loss of generality one can assume that components $g_{\alpha \beta}$ are independent of $u$ (see theorem 4.1 of \cite{csiclanek}). Hence, in the case of a $VSI$ spacetime ($\Lambda = 0$), the coordinates can be chosen such that $g_{\alpha \beta}= \delta_{\alpha \beta}$ and \eqref{localmetric} reduces to the standard form of the VSI line element (8) of \cite{vsiclanek} for which most of the above equations simplify.

Since the {metric} $\boldsymbol{g}$ is a degenerate Kundt {metric} and the Maxwell field $\boldsymbol{F}$ is a closed form of type N, $\BF$ takes the coordinate form  (see (13) of \cite{vsiformy})
\begin{equation}\label{solutionform}
\boldsymbol{F} = \frac{1}{p!} f_{\alpha_1 \dots \alpha_p}(u,x^\alpha) \de u \wedge \de x^{\alpha_1} \wedge \dots \wedge \de x^{\alpha_p}
\end{equation}
in coordinates $(r,u,x^\alpha)$ adopted in the previous section. Here, $f_{\alpha_1 \dots \alpha_p} \equiv F_{u \alpha_1 \dots \alpha_p}$. 
For $\BF$ in the form \eqref{solutionform}, { the} Maxwell equations \eqref{EMequations2b} and \eqref{EMequations2} reduce to the effective Maxwell equations
\begin{equation}\label{EFE5}
 \big( \sqrt{g^\perp} f^{\beta \alpha_2 \dots \alpha_p}  \big)_{,\beta} = 0, \qquad
f_{[\alpha_1 \dots \alpha_p, \beta]} = 0,
\end{equation}
for a $p$-form $\boldsymbol{f}$ in the $(d-2)$-dimensional transverse Riemannian space (see section 2.2 of \cite{vsiformy}).

In \cite{vsiformy}, it was pointed out that, under suitable conditions, 
solutions $(\Bg,\BF)$ of the Einstein-Maxwell equations with a $VSI$ metric $\Bg$ and a VSI Maxwell field $\BF$ are  \textit{universal} (in the sense of section 3.2 in \cite{vsiformy}) and thus also simultaneously solve various Einstein-generalized Maxwell theories. 
Weyl type N universal solutions considered in \cite{vsiformy} are necessarily $VSI$ $pp$-waves,
 i.e. spacetimes admitting  { a} covariantly constant null vector field
  (see \cite{review,vsiformy} and references therein),  for which the field equations  simplify
	considerably.


Let us conclude with a brief summary of section \ref{generalform}.
Any Weyl type N spacetime with a metric $\boldsymbol{g}$ 
corresponding to a solution 
of the electrovacuum  Einstein-Maxwell equations with a null Maxwell field $\boldsymbol{F}$ can be expressed in the 
coordinate form \eqref{localmetric} while the null Maxwell form $\boldsymbol{F}$ takes the coordinate form \eqref{solutionform}.
 The transverse metric $\boldsymbol{g}^\perp$ in the line element \eqref{localmetric} is a metric on a Riemannian space of constant curvature $\k$ given by \eqref{ccmetric} and both $\boldsymbol{g}$ and $\boldsymbol{F}$ have to satisfy  \eqref{EFE1}--\eqref{WC3}, \eqref{EFE4}, and \eqref{EFE5}.


\section*{Acknowledgments}

We are thankful to  M. Ortaggio and V. Pravda for useful comments on the draft.
AP acknowledges support from research plan RVO: 67985840 and research
grant GA\v CR 13-10042S. 


\appendix
\section{Curvature invariants in Kundt spacetimes of Weyl and traceless Ricci type N} \label{apendixB}
Let us study scalar curvature invariants in the Kundt subclass of Weyl type N spacetimes with the Ricci tensor of the form 
\begin{equation}\label{tracelessRicciN}
R_{ab} = \l g_{ab} + \eta k_a k_b,
\end{equation}
where $\l$ is a constant, $\eta$ is a scalar and $\Bk$ is a null vector. 
In general, a Kundt vector 
in a Ricci type II spacetime is necessarily a multiple WAND (see e.g. proposition 2 of \cite{ricciclanek}). 
Thus, in our case, it must coincide with the common AND $\Bell$ of the Weyl and the Ricci tensor. 
Hence, the spacetime is of Riemann type II (i.e. the Riemann tensor is of type II) with the Kundt AND $\Bell$. 
In addition to this, all boost weight zero null frame components of the Riemann tensor are constant (depending on $\l$). 

If $\l = 0$, the Riemann tensor is of type N and it is aligned with the Kundt vector $\Bell$. 
Thus, according to theorem 1 of \cite{HDVSI2} on characterization of $VSI$ spacetimes, the corresponding spacetime is $VSI$. 

If $\l$ is non-vanishing, then the spacetime is at least $CSI_0$, i.e. all curvature invariants constructed solely from the Riemann tensor (without incorporating its covariant derivatives) are constant.  Indeed, any full contraction of a tensor $\boldsymbol{T}$ is completely determined by its boost weight zero part  $\boldsymbol{T}_{(0)}$. Since in our case, the boost weight zero part $\boldsymbol{R}_{(0)}$ of the Riemann tensor $\boldsymbol{R}$ possesses only constant null frame components, also any full contraction of a tensor given by a series of tensor products of the Riemann tensor with itself or with the metric (which consists only of its boost weight zero part) is necessarily constant. See also section 2.3 of \cite{csiclanek}.

In order to study higher-order invariants constructed using also  covariant derivatives of $\boldsymbol{R}$, 
we make use of the balanced scalar approach introduced in \cite{VSI4d}, see also \cite{JiBiVPN}.  First, let us  define  $k$-balancedness of a tensor.
\begin{definition}We say that a tensor $\boldsymbol{T}$ is $k$-balanced if there exists a null vector $\Bell$ such that $\textnormal{bo}_{\Bell}(\boldsymbol{T}) < -k$   
and for any of its  null frame components $\eta$ with boost weight $b<-k$, the derivative $D^{-b-k} \eta$ is zero.    
\end{definition} 
\noindent Now,  let us  prove the following result on $k$-balancedness of covariant derivatives $\nabla^{(n)} \boldsymbol{R}$ of the Riemann tensor.  
\begin{propos}
In a Weyl type N Kundt spacetime with the Ricci tensor of the form \eqref{tracelessRicciN}, an arbitrary covariant derivative of the Riemann tensor is 1-balanced and thus  it is $VSI$.
\end{propos}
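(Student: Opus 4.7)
The strategy is to split the Riemann tensor into a boost weight zero constant-curvature piece, built algebraically from $\tilde\Lambda$ and the metric, plus a remainder $\tilde R_{abcd}$ whose only nonzero frame components have boost weight $-2$ (this remainder carries the Weyl type N part $\Omegap_{ij}$ together with the trace-free Ricci contribution $\eta k_a k_b$). Since the constant-curvature piece is covariantly constant, one has $\nabla^{(n)}R = \nabla^{(n)}\tilde R$ for every $n \geq 1$, so it suffices to prove that each $\nabla^{(n)} \tilde R$ is $1$-balanced.

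I would proceed by induction on $n$, with the inductive step provided by the following lemma: in a Kundt spacetime with $\Bell$ the Kundt vector, the covariant derivative of any $1$-balanced tensor is again $1$-balanced. To prove this lemma, one writes $\nabla T$ in the null frame using \eqref{covariantDecomposition}; the Kundt constraints $\kappa_i = \rho_{ij} = 0 = L_{10}$ remove the connection contributions that could otherwise raise boost weight, so the only potentially offending boost weight $-1$ frame components of $\nabla T$ come from the $D$-derivative of a boost weight $-2$ component of $T$, and these vanish by $1$-balancedness of $T$. The condition $D^{-b-1}\eta = 0$ on each remaining boost weight $b$ component of $\nabla T$ is then verified by commuting $D$ past $\delta_i$ and $\Delta$ via the Ricci identities and reducing to the corresponding condition on frame components of $T$ itself, which holds by the inductive hypothesis.

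The base case $n=1$ is handled by the same bookkeeping, now with the second Bianchi identity $\nabla_{[a}R_{bc]de}=0$ playing the role of the inductive hypothesis: in the present Weyl type N, Ricci type N, Kundt setting, the Bianchi identity forces the boost weight $-1$ frame components of $\nabla \tilde R$ to vanish and provides the needed $D$-derivative identities on the boost weight $-2$ and $-3$ components. The VSI conclusion is then immediate, since any scalar polynomial invariant of a tensor is determined by its boost weight zero frame components and a $1$-balanced tensor has none. I expect the main technical obstacle to lie in the detailed verification of the induction lemma — specifically, in tracking how Ricci rotation coefficients of various boost weights combine in the iterated $D^{-b-1}$ condition, and showing that the Kundt constraints together with the inductive hypothesis eliminate every potentially offending contribution.
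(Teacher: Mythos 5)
Your proposal follows essentially the same route as the paper: you split off the covariantly constant scalar (constant-curvature) part of the Riemann tensor, observe that the remainder $\boldsymbol{E}+\boldsymbol{C}$ has only boost weight $-2$ components whose $D$-derivatives vanish by the Bianchi identities, and then propagate $1$-balancedness through covariant differentiation via the Kundt constraints and the commutators of $D$ with $\delta_i$ and $\Delta$ --- which is precisely the content of lemmas \ref{lem_1bal_coef} and the one following it (the paper's analogues of lemmas 4.2 and 4.3 of \cite{univIIIN}). The only cosmetic difference is that the paper takes as its base case the $1$-balancedness of $\boldsymbol{E}$ and $\boldsymbol{C}$ themselves rather than of their first derivative, but the underlying argument is identical.
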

\begin{proof}
Let us stress out again that { such a} spacetime is necessarily aligned { \cite{NNspacetimes}}, and hence without loss of generality one can assume that $\Bk = \Bell$. Also, using {the} notation of the GHP formalism {\cite{ghpclanek}}, the boost weight $(-2)$ component of the Ricci tensor reads $\omegap = \eta$. 

Consider the Ricci decomposition of the Riemann tensor \cite{stephani}, 
\begin{equation}
\boldsymbol{R} = \boldsymbol{G} + \boldsymbol{E}+\boldsymbol{C},
\end{equation}
where $\boldsymbol{G}$ is the so-called scalar part ($G_{abcd} \propto R g_{a[c}g_{d]b}$
), 
$\boldsymbol{E}$ is the semi-traceless part ($E_{abcd} \propto g_{a[c}S_{d]b}-g_{b[c}S_{d]a}$,  
$S_{ab}$
being the traceless part of the Ricci tensor) 
and the Weyl tensor $\boldsymbol{C}$ is the
fully traceless part of the Riemann tensor. 
 
Since the Ricci scalar $R$ is constant ($R \propto \l$), we have $\nabla \boldsymbol{G} = \boldsymbol{0}$ and hence 
\begin{equation}\label{nablaR}
\nabla \boldsymbol{R} = \nabla \boldsymbol{C} + \nabla \boldsymbol{E}.
\end{equation}
For Weyl type N Kundt spacetimes with the Ricci tensor of the form (\ref{tracelessRicciN}), equations (28)--(30) in \cite{univIIIN}
are still valid together with $D\omega'=0$ and thus the following generalizations of lemmas 4.2 and 4.3 in \cite{univIIIN} are also 
valid\footnote{As in \cite{univIIIN}, we consider behaviour of scalars with respect to constant boosts.}
\begin{lemma}
\label{lem_1bal_coef}
In a Weyl type N Kundt spacetime with the Ricci tensor of the form \eqref{tracelessRicciN}, for a 1-balanced scalar $\eta$, scalars $L_{11} \eta$,  $\tau_i \eta$, $L_{1i}\eta$, $ \kappa'_i 
\eta $,  $ \rho'_{ij} 
\eta$,  $\Mi_{\!j1}\!  \eta$, $ \Mi_{\!kl}\!  \eta$ and $D \eta ,\ \delta_i \eta,\ \T \eta $ are also 1-balanced scalars.
\end{lemma}
\begin{lemma}
In a Weyl type N Kundt spacetime with the Ricci tensor of the form \eqref{tracelessRicciN}, a covariant derivative of a 1-balanced tensor
is again a 1-balanced tensor.
\end{lemma}
\noindent Since the Weyl tensor and the traceless part of the Ricci tensor, $\eta \ell_a\ell_b$, as well as $\boldsymbol{E}$ are 1-balanced, their arbitrary derivative
is also 1-balanced and so is any covariant derivative of the Riemann tensor, i.e. $\nabla^{(k)} \boldsymbol{R}$ is 1-balanced for any $k \in \mathbb{N}$.
\end{proof}

Thus, although the Riemann tensor is not $VSI$, its first covariant derivative is.  In particular, this means that the only non-trivial scalar curvature invariants are precisely those constructed from the Riemann tensor itself. However, we already know that all these invariants are necessarily constant. Therefore, we immediately obtain the following result. 
\begin{corol}\label{CSIKundt}
A Weyl type N Kundt spacetime with the Ricci tensor of the form \eqref{tracelessRicciN} is $CSI$.
\end{corol}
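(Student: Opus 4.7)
The plan is to combine the two ingredients already established in the excerpt: the $CSI_0$ observation made right before the proposition (every full contraction of tensor products of $\boldsymbol{R}$ alone is a constant because the boost weight zero null frame components of $\boldsymbol{R}$ are constant), and the preceding proposition (every $\nabla^{(k)}\boldsymbol{R}$ with $k\geq 1$ is 1-balanced). A scalar polynomial curvature invariant is by definition a finite linear combination of full traces of tensor products of the shape $\nabla^{(k_1)}\boldsymbol{R}\otimes\dots\otimes\nabla^{(k_m)}\boldsymbol{R}$ with $k_j\geq 0$, so it suffices to show that every such trace is constant.

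First I would dispose of the terms in which all $k_j=0$. These are the full contractions of tensor products of $\boldsymbol{R}$ with itself. The remark preceding the proposition already gives the conclusion: the spacetime is Riemann type II (so $\boldsymbol{R}$ has no positive boost weight components), any full contraction is determined by its boost weight zero part, and the boost weight zero components of $\boldsymbol{R}$ depend only on $\tilde\Lambda$ and are constant. Hence each such trace is a constant.

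Next I would handle the terms containing at least one factor with $k_j\geq 1$. The key observation is that in the Kundt frame $\{\Bell,\Bn,\Bm_{(i)}\}$ the Riemann tensor has boost weights $\leq 0$ (Riemann type II), whereas by the proposition every factor $\nabla^{(k_j)}\boldsymbol{R}$ with $k_j\geq 1$ carries only components of boost weight $<-1$, i.e. $\leq -2$. Since boost weights add under tensor products, the whole product has all nonzero components of strictly negative boost weight, so its boost weight zero part is zero. A full contraction of a tensor equals the full contraction of its boost weight zero part, so every such mixed or pure-derivative trace vanishes identically.

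Combining the two cases, every scalar polynomial curvature invariant is constant, proving $CSI$. The main obstacle — already absorbed into the preceding proposition — is the 1-balancedness of $\nabla^{(k)}\boldsymbol{R}$ for all $k\geq 1$; once that is in hand, the corollary itself is a short boost-weight bookkeeping argument, and the only thing to be careful about is to note that Riemann type II (rather than the stronger Weyl type N used for the derivatives) is exactly what is needed to guarantee that multiplying by undifferentiated factors of $\boldsymbol{R}$ does not spoil the sign of the total boost weight.
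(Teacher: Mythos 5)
Your proof is correct and follows essentially the same route as the paper: it combines the $CSI_0$ observation (constancy of invariants built from $\boldsymbol{R}$ alone) with the preceding proposition on $1$-balancedness of $\nabla^{(k)}\boldsymbol{R}$. You are in fact slightly more careful than the paper's one-line conclusion, since you spell out why \emph{mixed} invariants containing both differentiated and undifferentiated factors of $\boldsymbol{R}$ vanish (negative total boost order of the product), a point the paper leaves implicit in the phrase ``the only non-trivial scalar curvature invariants are precisely those constructed from the Riemann tensor itself.''
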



\nocite{*}
\bibliographystyle{unsrt}
\renewcommand\refname{References}
\bibliography{refs}  

\end{document}